\newcommand\R{\mathbb{R}}
\newcommand\C{\mathbb{C}}
\newcommand\prb{\mathbb{P}}
\newcommand\X{\mathrm{\mathcal{X}}}
\newcommand\rk{\mathrm{rk}}
\newcommand{\argmin}{\operatornamewithlimits{argmin}}
\title{Smoothing complex-valued signals on graphs with Monte-Carlo}
\name{\minibox{\hspace{4cm}Hugo Jaquard\,$^{\dagger\star}$  \quad Michaël Fanuel\,$^{\ddagger\star}$\\ 
		Pierre-Olivier Amblard\,$^{\dagger}$\quad Rémi Bardenet\,$^{\ddagger}$\quad Simon Barthelmé\,$^{\dagger}$\quad Nicolas Tremblay\,$^{\dagger}$\thanks{This work was supported by the ANR grant GRANOLA (ANR-21-CE48-0009), the ERC grant BLACKJACK (ERC-2019-STG-851866), the ANR AI chair BACCARAT (ANR-20-CHIA-0002), as well as the LabEx PERSYVAL-Lab (ANR-11-LABX-0025-01), the Grenoble Data Institute (ANR-15-IDEX- 02), MIAI@Grenoble Alpes (ANR-19-P3IA-0003), the LIA CNRS/Melbourne Univ Geodesic, and the IRS (Initiatives de Recherche Stratégiques) of the IDEX Université Grenoble Alpes.}}}
\address{$^{\dagger}$ GIPSA-lab, CNRS, Univ. Grenoble Alpes, Grenoble INP\\
	$^{\ddagger}$ Univ. Lille, CNRS, Centrale Lille, UMR 9189 – CRIStAL\\
$^\star$ equally contributing authors}
\newtheorem{prop}{Proposition}
\theoremstyle{remark}
\begin{document}

\maketitle

\begin{abstract}
We introduce new smoothing estimators for complex signals on graphs, based on a recently studied \emph{Determinantal Point Process} (DPP). These estimators are built from subsets of edges and nodes drawn according to this DPP, making up trees and \emph{unicycles}, \emph{i.e.}, connected components containing exactly one cycle. We provide a Julia implementation of these estimators and study their performance when applied to a ranking problem.
\end{abstract}

\begin{keywords}
magnetic laplacian, spanning forests, determinantal point processes, graph smoothing, angular synchronization, ranking
\end{keywords}

\section{Introduction}
\label{sect:intro}

\emph{Graph signal processing} (GSP, \cite{shuman2013emerging}) usually considers real data defined over the nodes of a graph $G = (V,E)$, and classically relies on the graph Laplacian. For instance, a typical task in GSP consists in smoothing (denoising) a signal $g \in \R^V$ by solving the penalized (Tikhonov) problem
\begin{equation}
  \label{eq:tikhonov}
  \underset{f \in \mathbb{R}^V}{\argmin} \ q\Vert f - g \Vert^2 + f^\top L f
\end{equation}
where $L\in \R^{V\times V}$ is the graph Laplacian. We will consider throughout weighted and undirected graphs. In Equation~\eqref{eq:tikhonov},
$f^\top L f$ penalises the squared-norm of the discrete derivative on the
graph, \emph{i.e.}:
$$f^\top L f = \sum_{e = (v,v')} w_e ( f(v') - f(v) )^2, $$ where the sum runs
over all edges of the graph and $w_e > 0$ is the edge weight. Thus, the quadratic
form $f^\top L f$ can be thought of as i/~computing the squared difference between
signal values along neighbouring edges, ii/~computing a weighted sum.

For multivariate signals, defining a discrete derivative becomes less obvious, as additional geometry enters the picture. For instance, taking $f(v') - f(v)$ along
an edge $e = (v,v')$ assumes that $f(v')$ and $f(v)$ use the same coordinate system,
and introducing a local change of basis along the edge $(v,v')$ is a way to precise the relation between $f(v)$ and $f(v')$ \cite{singer2012vector}.
We focus on complex-valued signals $f \in \C^V$,
for which a multiplication by  $e^{i
  \theta_{e}}$, where $\theta_{e} \in \left[0 , 2 \pi \right]$, can for instance represent a known phase offset between measurements $f(v')$ and $f(v)$. In this setting, we can define a
\emph{magnetic Laplacian} $L \in \C^{V \times V}$ \cite{kenyon2011spanning} (see Appendix \ref{app:defs}) which acts as follows ($f^*$ denotes the conjugate transpose of $f$):
\begin{equation}
  \label{eq:magnetic-lap}
  f^* L f = \sum_{e = (v,v')} w_e  \vert f(v') - e^{i \theta_{e}} f(v) \vert ^2.
\end{equation}
Note that this equation supposes an orientation of each edge: each $\theta_e$ is thus given with an orientation of $e$. This choice of orientation is however arbitrary: for a given edge $e$, one orientation associated to $\theta_e$ is equivalent to the other orientation associated to $-\theta_e$ as $|f(v') - e^{i \theta_{e}} f(v) | ^2 = |e^{-i \theta_{e}}f(v') -  f(v) | ^2$.
The set $\{e^{i \theta_e}\}_e$ describes a \emph{unitary connection} between the nodes of $G$ \cite{kenyon2011spanning}.

Such Laplacians have applications to synchronization
\cite{singer2011angular} and ranking problems
\cite{stella2009angular,yu2011angular,cucuringu2016sync}, as we explain in Section~\ref{sect:ranking}. 

GSP algorithms that use the graph Laplacian often have $\mathcal{O}(\vert V \vert^3)$ scaling when
implemented exactly, due to the matrix inversions or factorisations that are
used (\emph{e.g.}, the solution of Equation~\eqref{eq:tikhonov} reads $f_o = q(L + qI)^{-1} g$). In large graphs approximate methods are necessary, and
\cite{pilavci2021graph} introduced a Monte-Carlo estimator for the Tikhonov
smoothing problem of Equation~\eqref{eq:tikhonov}, with favourable asymptotic runtime. In this
work we generalise the estimator of \cite{pilavci2021graph} to complex signals, 
using a process recently introduced in \cite{fanuel2022sparsification}.

In Section \ref{sect:MTSF_process}, we introduce a slight variation of the
random process of \cite{fanuel2022sparsification}, defined over graphs with a
unitary connection. Our main theoretical results are given in Section
\ref{SECT:ESTIMATORS}, where we derive estimators for the solution of the
Tikhonov smoothing problem. We describe a practical application to ranking in Section \ref{sect:ranking}. Some technical definitions and proofs are deferred to Appendices \ref{app:defs}, \ref{app:background} and \ref{app:estimators}.

\section{A process over multi-type spanning forests}
\label{sect:MTSF_process}

We are interested in a distribution generalizing both the uniform distribution
over \emph{spanning trees} (UST) of a graph, and the \emph{random spanning
  forests} distribution \cite{avena2013some}. A spanning tree is a subset of
edges $\phi \subseteq E$ such that the graph with nodes $V$ and edges $\phi$ is both connected and without cycles (cycle-free). A rooted
spanning forest (RSF) $\phi \subseteq E \cup V$ is the combination of a spanning
forest and a set
of distinguished nodes called the roots (one root per tree). A variation of these distributions can be defined over the set of spanning forests of \emph{unicycles} (FU) \cite{forman1993determinants,kenyon2011spanning}, subsets of edges containing \emph{exactly one} cycle per connected component, and spanning all the nodes in $V$. The generalization we consider instead draws its samples from the set of rooted \emph{multi-type spanning forests} (MTSF). A rooted MTSF is a spanning subset of edges and nodes $\phi \subseteq E \cup V$ whose connected components are made up of rooted trees and \emph{unicycles}. These different structures are illustrated in Fig. \ref{fig:structures}.

We will use the following distribution, introduced in \cite{fanuel2022sparsification}, over rooted MTSFs $\phi = \phi_\bullet \cup \rho(\phi)$ of $G$:
\begin{equation}
\label{eq:prob_mtsf}
\prb(\phi) \propto \prod_{r \in \rho(\phi)} q_r \prod_{e \in \phi_\bullet} w_e \prod_C \left( 2 - 2\cos(\theta_C) \right),\end{equation}
with $\rho(\phi)$ the set of roots of $\phi$, $\phi_\bullet$ its set of edges and $q_v \in \R^*_+$ positive parameters associated to each node $v$. The third product is over the cycles of the unicycles in $\phi$, where $\theta_C = \sum_{e \in C} \theta_e$.

\begin{figure}[t]
\begin{minipage}[b]{0.48\linewidth}
  \centering
  \centerline{\begin{tikzpicture}[every node/.style={circle,inner sep=1pt,thick,draw,scale=0.76},
	                        every edge/.style={very thick,draw}]
	        \node (1) at (0,0){};
            \node (2) [right of=1]{};
            \node (3) [right of=2]{};
            \node (4) [right of=3]{};
            \node (5) [below of=1]{};
            \node (6) [right of=5]{};
            \node (7) [right of=6]{};
            \node (8) [right of=7]{};
            \node (9) [below of=5]{};
            \node (10) [right of=9]{};
            \node (11) [right of=10]{};
            \node (12) [right of=11]{};
            \node (13) [below of=9]{};
            \node (14) [right of=13]{};
            \node (15) [right of=14]{};
            \node (16) [right of=15]{};
            \path [-,color=red] (1) edge (2);
            \path [-,color=black] (2) edge[dotted] (3);
            \path [-,color=red] (3) edge (4);
            \path [-,color=red] (5) edge (6);
            \path [-,color=black] (6) edge[dotted] (7);
            \path [-,color=black] (7) edge[dotted] (8);
            \path [-,color=black] (9) edge[dotted] (10);
            \path [-,color=black] (10) edge[dotted] (11);
            \path [-,color=red] (11) edge (12);
            \path [-,color=red] (13) edge (14);
            \path [-,color=red] (14) edge (15);
            \path [-,color=black] (15) edge[dotted] (16);
            \path [-,color=black] (1) edge[dotted] (5);
            \path [-,color=red] (2) edge (6);
            \path [-,color=black] (3) edge[dotted] (7);
            \path [-,color=red] (4) edge (8);
            \path [-,color=red] (5) edge (9);
            \path [-,color=red] (6) edge (10);
            \path [-,color=black] (7) edge[dotted] (11);
            \path [-,color=red] (8) edge (12);
            \path [-,color=black] (9) edge[dotted] (13);
            \path [-,color=red] (10) edge (14);
            \path [-,color=red] (11) edge (15);
            \path [-,color=red] (12) edge (16);
        \end{tikzpicture}}
  \centerline{Spanning tree}\medskip
\end{minipage}
\hfill
\begin{minipage}[b]{.48\linewidth}
  \centering
  \centerline{\begin{tikzpicture}[every node/.style={circle,inner sep=1pt,thick,draw,scale=0.76},
	                        every edge/.style={very thick,draw}]
	        \node (1) at (0,0){};
            \node (2) [right of=1]{};
            \node (3) [right of=2]{};
            \node (4) [right of=3]{};
            \node (5) [below of=1]{};
            \node (6) [right of=5]{};
            \node (7) [right of=6]{};
            \node (8) [right of=7]{};
            \node (9) [below of=5]{};
            \node (10) [right of=9]{};
            \node (11) [right of=10]{};
            \node (12) [right of=11]{};
            \node (13) [below of=9]{};
            \node (14) [right of=13]{};
            \node (15) [right of=14]{};
            \node (16) [right of=15]{};
            \path [-,color=red] (1) edge (2);
            \path [-,color=black] (2) edge[dotted] (3);
            \path [-,color=red] (3) edge (4);
            \path [-,color=red] (5) edge (6);
            \path [-,color=black] (6) edge[dotted] (7);
            \path [-,color=red] (7) edge (8);
            \path [-,color=red] (9) edge (10);
            \path [-,color=black] (10) edge[dotted] (11);
            \path [-,color=red] (11) edge (12);
            \path [-,color=red] (13) edge (14);
            \path [-,color=red] (14) edge (15);
            \path [-,color=black] (15) edge[dotted] (16);
            \path [-,color=red] (1) edge (5);
            \path [-,color=red] (2) edge (6);
            \path [-,color=red] (3) edge (7);
            \path [-,color=red] (4) edge (8);
            \path [-,color=red] (5) edge (9);
            \path [-,color=black] (6) edge[dotted] (10);
            \path [-,color=black] (7) edge[dotted] (11);
            \path [-,color=red] (8) edge (12);
            \path [-,color=black] (9) edge[dotted] (13);
            \path [-,color=black] (10) edge[dotted] (14);
            \path [-,color=red] (11) edge (15);
            \path [-,color=red] (12) edge (16);
        \end{tikzpicture}}
  \centerline{Spanning forest of unicycles}\medskip
\end{minipage}

\begin{minipage}[b]{.48\linewidth}
  \centering
  \centerline{\begin{tikzpicture}[every node/.style={circle,inner sep=1pt,thick,draw,scale=0.76},
	                        every edge/.style={very thick,draw}]
	        \node (1) at (0,0){};
            \node (2) [right of=1,color=red,fill,very thick,inner sep=2pt]{};
            \node (3) [right of=2]{};
            \node (4) [right of=3]{};
            \node (5) [below of=1]{};
            \node (6) [right of=5]{};
            \node (7) [right of=6,color=red,fill,very thick,inner sep=2pt]{};
            \node (8) [right of=7]{};
            \node (9) [below of=5]{};
            \node (10) [right of=9,color=red,fill,very thick,inner sep=2pt]{};
            \node (11) [right of=10]{};
            \node (12) [right of=11]{};
            \node (13) [below of=9]{};
            \node (14) [right of=13]{};
            \node (15) [right of=14]{};
            \node (16) [right of=15]{};
            \path [-,color=red] (1) edge (2);
            \path [-,color=black] (2) edge[dotted] (3);
            \path [-,color=red] (3) edge (4);
            \path [-,color=red] (5) edge (6);
            \path [-,color=black] (6) edge[dotted] (7);
            \path [-,color=red] (7) edge (8);
            \path [-,color=black] (9) edge[dotted] (10);
            \path [-,color=red] (10) edge (11);
            \path [-,color=black] (11) edge[dotted] (12);
            \path [-,color=red] (13) edge (14);
            \path [-,color=black] (14) edge[dotted] (15);
            \path [-,color=black] (15) edge[dotted] (16);
            \path [-,color=black] (1) edge[dotted] (5);
            \path [-,color=red] (2) edge (6);
            \path [-,color=black] (3) edge[dotted] (7);
            \path [-,color=red] (4) edge (8);
            \path [-,color=red] (5) edge (9);
            \path [-,color=black] (6) edge[dotted] (10);
            \path [-,color=black] (7) edge[dotted] (11);
            \path [-,color=red] (8) edge (12);
            \path [-,color=black] (9) edge[dotted] (13);
            \path [-,color=red] (10) edge (14);
            \path [-,color=red] (11) edge (15);
            \path [-,color=red] (12) edge (16);
        \end{tikzpicture}}
  \centerline{Rooted spanning forest}\medskip
\end{minipage}
\hfill
\begin{minipage}[b]{0.48\linewidth}
  \centering
  \centerline{\begin{tikzpicture}[every node/.style={circle,inner sep=1pt,thick,draw,scale=0.76},
	                        every edge/.style={very thick,draw}]
	        \node (1) at (0,0){};
            \node (2) [right of=1]{};
            \node (3) [right of=2]{};
            \node (4) [right of=3]{};
            \node (5) [below of=1]{};
            \node (6) [right of=5]{};
            \node (7) [right of=6]{};
            \node (8) [right of=7,color=red,fill,very thick,inner sep=2pt]{};
            \node (9) [below of=5]{};
            \node (10) [right of=9]{};
            \node (11) [right of=10]{};
            \node (12) [right of=11]{};
            \node (13) [below of=9]{};
            \node (14) [right of=13]{};
            \node (15) [right of=14]{};
            \node (16) [right of=15]{};
            \path [-,color=red] (1) edge (2);
            \path [-,color=black] (2) edge[dotted] (3);
            \path [-,color=red] (3) edge (4);
            \path [-,color=red] (5) edge (6);
            \path [-,color=black] (6) edge[dotted] (7);
            \path [-,color=red] (7) edge (8);
            \path [-,color=black] (9) edge[dotted] (10);
            \path [-,color=red] (10) edge (11);
            \path [-,color=black] (11) edge[dotted] (12);
            \path [-,color=red] (13) edge (14);
            \path [-,color=red] (14) edge (15);
            \path [-,color=black] (15) edge[dotted] (16);
            \path [-,color=red] (1) edge (5);
            \path [-,color=red] (2) edge (6);
            \path [-,color=black] (3) edge[dotted] (7);
            \path [-,color=red] (4) edge (8);
            \path [-,color=red] (5) edge (9);
            \path [-,color=black] (6) edge[dotted] (10);
            \path [-,color=black] (7) edge[dotted] (11);
            \path [-,color=red] (8) edge (12);
            \path [-,color=black] (9) edge[dotted] (13);
            \path [-,color=red] (10) edge (14);
            \path [-,color=red] (11) edge (15);
            \path [-,color=red] (12) edge (16);
        \end{tikzpicture}}
  \centerline{Rooted multi-type spanning forest}\medskip
\end{minipage}

\caption{Different subsets of edges and nodes, in \textcolor{red}{red}.}
\label{fig:structures}
\end{figure}
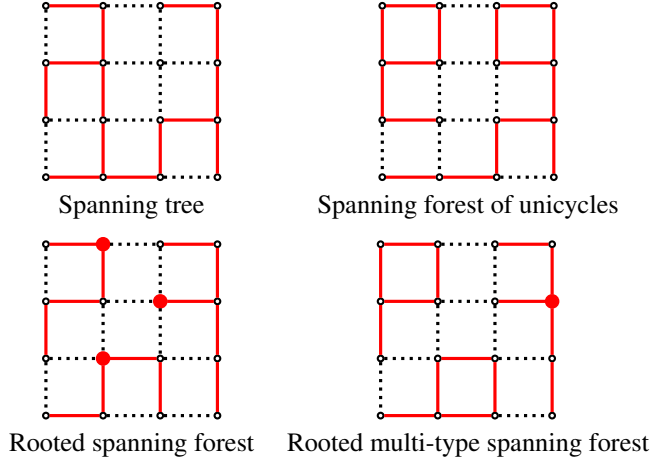

One may wonder why we do not consider the \emph{uniform} distribution over MTSFs and instead favor sampling unicycles for which $2 - 2\cos(\theta_C)$ is large, meaning that these cycles are \emph{inconsistent}. Part of the answer lies in the fact that Eq. \ref{eq:prob_mtsf} is known to describe a \emph{Determinantal Point Process} (DPP) over $E \cup V$, a useful property for proving the results in Section \ref{SECT:ESTIMATORS}. More details are available in Appendix \ref{app:dpps}.

Aforementioned distributions such as USTs and RSFs are conveniently sampled using (a variant of) Wilson's algorithm \cite{wilson1996generating}, based on random walks on the graph. These sampling procedures have been generalized to FUs \cite{kassel2017random} and, recently, to MTSFs \cite{fanuel2022sparsification}, under the sampling condition: \begin{equation}
\label{eq:sampling_condition}
\cos(\theta_\gamma) \geq 0 \text{ for all cycles $\gamma$.}
\end{equation}
Note that this condition applies to \emph{all cycles} $\gamma$ in $G$, and not only in some MTSF $\phi$. The sampling algorithm is recalled in Appendix \ref{app:sampling}.

Bounds on the running time of the sampling algorithm are discussed in \cite{kassel2017random,fanuel2022sparsification}. Here, we only mention that the expected running time is linear in the number of edges, and that it decreases as $\min_v q_v$ increases.

\section{Estimators for smoothing}
\label{SECT:ESTIMATORS}

Monte-Carlo estimators for smoothing on graphs have been developped using RSFs \cite{pilavci2021graph}. The main idea is to sample a rooted RSF $\phi$ before propagating the value of the function at the roots to the other nodes in their associated trees. In the following, we show how this can be generalized to graphs with a unitary connections using the distribution in Equation~\eqref{eq:prob_mtsf}. Proofs are given in Appendix \ref{app:estimators}. Specifically, given $g \in \C^V$ we derive estimators of $f_o = (L + Q)^{-1}Qg$. Here $Q$ is the diagonal matrix of the $q_v$'s  and $L = D - A_\theta$ is the Hermitian \emph{magnetic Laplacian} matrix, with $D$ the diagonal degree matrix and $(A_\theta)_{v',v} = e^{i \theta_{e}}$ if $e = (v,v') \in E$ ($0$ if $e \notin E$). When $q_v = q > 0$ for all $v \in V$, $f_o$ is the optimal solution of the Tikhonov problem:
$$ \underset{f \in \C^V}{\argmin} \ q \Vert f - g \Vert^2 + f^* L f. $$

The first estimator $\tilde{f}$ is built by propagating values through the transport maps $z \mapsto e^{i\theta_{e}} z$ on a rooted multi-type spanning forest $\phi$ sampled according to Equation~\eqref{eq:prob_mtsf}:
$$ \tilde{f}(v) = \left\{ \begin{array}{ll}\psi_{r_\phi(v)\rightarrow v} g(r_\phi(v)) & \text{if the connected component of} \\ & \text{$v$ is a rooted tree,}\\
0 & \text{otherwise,}\end{array}\right.$$
where $r_\phi: V \rightarrow V$ maps nodes to the root of the tree containing them, $a \rightarrow b$ denotes the unique path from $a$ to $b$ in $\phi$ and $\psi_{a \rightarrow b} = \prod_{e \in a \rightarrow b} e^{i \theta_e}$.

\begin{prop}
\label{prop:unbiased}
$\tilde{f}$ is an unbiased estimator of $f_o$: $$\mathbb{E}_\phi(\tilde{f}) = f_o.$$ 
\end{prop}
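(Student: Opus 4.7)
The plan is to write $\mathbb{E}_\phi[\tilde f]$ as a linear operator applied to $g$ and then identify that operator with $(L+Q)^{-1}Q$ via a magnetic matrix-tree argument. By linearity of expectation and the definition of $\tilde f$,
\[
\mathbb{E}_\phi[\tilde f(v)] = \sum_{u \in V} K_{vu}\, g(u), \qquad K_{vu} = \mathbb{E}_\phi\!\left[\mathbf{1}\{r_\phi(v)=u\}\, \psi_{u\to v}\right],
\]
where the indicator is understood to vanish whenever the connected component of $v$ is a unicycle. The proposition is therefore equivalent to the matrix identity $K = (L+Q)^{-1}Q$.

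To evaluate $K_{vu}$, I would invoke the magnetic matrix-tree theorem \cite{forman1993determinants,kenyon2011spanning}, which identifies the normalizing constant of \eqref{eq:prob_mtsf} as $Z = \det(L+Q)$. Writing $w(\phi) = \prod_{r\in\rho(\phi)} q_r \prod_{e\in\phi_\bullet} w_e \prod_{C} (2-2\cos\theta_C)$ for the weight of a rooted MTSF, this gives
\[
K_{vu} = \frac{1}{\det(L+Q)} \sum_{\phi} \mathbf{1}\{r_\phi(v)=u\}\, \psi_{u\to v}\, w(\phi).
\]
The remaining ingredient is an all-minors extension of the same theorem, namely the identity
\[
\sum_{\phi} \mathbf{1}\{r_\phi(v)=u\}\, \psi_{u\to v}\, w(\phi) \;=\; q_u\, \operatorname{adj}(L+Q)_{vu}.
\]
Combined with Cramer's rule, this yields $K_{vu} = q_u[(L+Q)^{-1}]_{vu} = [(L+Q)^{-1}Q]_{vu}$, which is precisely what must be shown.

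The main obstacle lies in this last off-diagonal combinatorial identity. The principal-minor case $\det(L+Q) = \sum_\phi w(\phi)$ follows from a direct Cauchy--Binet expansion, as in the real setting of \cite{pilavci2021graph}, but extending it off the diagonal requires tracking the phases $e^{\pm i\theta_e}$ alongside the determinantal signs. Concretely, the $(u,v)$-minor of $L+Q$ expands into a sum over functional digraphs on $V\setminus\{u\}$; after pairing opposite orientations of each residual cycle, these reorganize into rooted MTSFs in which the tree that formerly contained $u$ carries a distinguished $v\to u$ path contributing exactly the holonomy $\psi_{u\to v}$, the remaining unicycles and trees contributing their usual weights. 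A more algorithmic alternative, closer to the Wilson-style sampler of \cite{fanuel2022sparsification}, is to derive $(L+Q)\mathbb{E}[\tilde f] = Qg$ row by row by conditioning on the first step of the walk started at $v$: either the walk is killed with probability $q_v/(q_v + \sum_{e\ni v} w_e)$, freezing $\tilde f(v) = g(v)$, or it traverses an edge $(v,v')$ acquiring a phase $e^{i\theta_e}$, and the resulting local balance matches precisely the $v$-th row of $(L+Q)K = Q$.
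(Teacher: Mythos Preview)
Your primary route—Cramer's rule for $(L+Q)^{-1}$ together with an all-minors magnetic matrix-tree identity—is exactly the paper's approach. The paper carries out the off-diagonal step by applying Cauchy--Binet to the factorization $L+Q=\begin{bmatrix}\nabla^*&\sqrt{Q}\end{bmatrix}\begin{bmatrix}\nabla\\\sqrt{Q}\end{bmatrix}$, so that $\det(L+Q)_{-j,-i}$ becomes a sum over $\phi\subseteq E\cup V$ with $|\phi|=|V|-1$; the surviving configurations are precisely rooted MTSFs in which the tree containing $i$ and $j$ is unrooted and contributes the holonomy $\psi_{j\to i}$, while the remaining trees and unicycles contribute their usual weights. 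This is the same identity you state as $q_u\operatorname{adj}(L+Q)_{vu}$, just obtained through Cauchy--Binet rather than a functional-digraph expansion. The paper then repackages the result as $(L+Q)^{-1}_{i,j}=\mathbb{E}[\psi_{j\to i}\mathbf{1}_{i\sim j}\mid j\in\rho(\phi)]$ and closes with the law of total expectation, which matches your $K_{vu}=q_u(L+Q)^{-1}_{vu}$.

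Your ``algorithmic alternative'' via a first-step decomposition of the Wilson walk is not what the paper does, and as written it does not quite close. The estimator $\tilde f(v)$ is a functional of the entire MTSF, not of a single walk launched from $v$, so the one-step balance $(L+Q)\mathbb{E}[\tilde f]=Qg$ you describe presupposes a spatial Markov/coupling property of the MTSF law (that conditioning on the first transition out of $v$ leaves an MTSF-type measure on the rest) which you have not established and which is nontrivial once the cycle acceptance/rejection step enters. Moreover, the Wilson-type sampler of \cite{fanuel2022sparsification} is only known to target \eqref{eq:prob_mtsf} under the angle condition \eqref{eq:sampling_condition}, whereas Proposition~\ref{prop:unbiased} is stated for the MTSF law in general; the determinantal/Cauchy--Binet argument avoids this restriction entirely.
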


As a consequence of the Central Limit Theorem, Monte-Carlo estimators converge at
a $\mathcal{O}(\frac{\sigma}{\sqrt{n}})$ rate, with $\sigma$ the standard
deviation of the estimator and $n$ the number of samples. Instead of increasing
the number of samples one may instead focus on decreasing the variance, and in
 our setting this can be done at little additional cost. A first approach is to use a Rao-Blackwell version of the estimator $\tilde{f}$ (see \cite{blackwell1947conditional,rao1992information}) by conditioning on the set of \emph{unrooted} connected components $\pi \subseteq E$ of the MTSF. Consider the estimator:
$$ \overline{f}(v) = \left\{ \begin{array}{ll} \psi_{r_\phi(v)\rightarrow v} h(r_\phi(v)) & \text{if the connected component} \\ & \text{of $v$ is a rooted tree,}\\
0 & \text{otherwise,}\end{array}\right.$$
where $h(u) = \frac{\sum_{w \in T_u} q_w \psi_{w \to u}g(w)}{\sum_{w \in T_u} q_w},$ with $T_v$ the tree containing $v$ in $\phi$. Variance reduction is achieved solely from computing a mean over the nodes of the rooted trees.

\begin{prop}
\label{prop:var_reduction}
We have $\forall v \in V, \overline{f}(v) = \mathbb{E}_{\phi}(\tilde{f}(v) \vert \phi_\bullet = \pi)$ and, as a consequence: $\mathbb{E}_{\pi}(\overline{f}) = \mathbb{E}_\phi(\tilde{f} ) = f_o$. Also, by the law of total variance, $\overline{f}(v)$ has a lower variance than $\tilde{f}(v)$:
$$\mathrm{Var}(\overline{f}(v)) =  \mathrm{Var}(\tilde{f}(v)) - \mathbb{E}(\mathrm{Var}(\tilde{f}(v) \vert \phi_\bullet = \pi))\leq \mathrm{Var}(\tilde{f}(v))$$
\end{prop}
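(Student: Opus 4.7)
The plan is to establish the three assertions in order: (i) the pointwise identity $\overline{f}(v) = \mathbb{E}_\phi(\tilde{f}(v) \mid \phi_\bullet = \pi)$, (ii) the unbiasedness $\mathbb{E}_\pi(\overline{f}) = f_o$ via the tower property, and (iii) the variance reduction via the law of total variance. Once (i) is in hand, (ii) and (iii) follow from standard probabilistic arguments combined with Proposition~\ref{prop:unbiased}.

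For (i), I would first derive the conditional law of the rooting given $\phi_\bullet = \pi$ directly from Eq.~\eqref{eq:prob_mtsf}. Conditioned on $\phi_\bullet = \pi$, the edge product and the cycle product are fixed, so the only remaining randomness is the choice of exactly one root in each tree component (unicycle components carry no root). The conditional law factorises across trees, and within a tree $T$ one obtains
$$\prb(r_\phi(v) = r \mid \phi_\bullet = \pi) = \frac{q_r}{\sum_{w \in T} q_w}, \qquad r \in T.$$

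Next I would split on the two cases of the definition of $\tilde{f}$. If $v$ lies in a unicycle component, then $\tilde{f}(v) = 0$ almost surely given $\pi$, matching $\overline{f}(v) = 0$. If $v$ lies in a tree $T$, averaging $\psi_{r \to v}\,g(r)$ against the conditional law above yields
$$\mathbb{E}_\phi(\tilde{f}(v) \mid \phi_\bullet = \pi) = \frac{\sum_{r \in T} q_r\,\psi_{r \to v}\,g(r)}{\sum_{w \in T} q_w}.$$
The main obstacle is to reconcile this with $\overline{f}(v) = \psi_{r_\phi(v) \to v}\,h(r_\phi(v))$, whose expression still seems to depend on the random root. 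The key lemma is the multiplicative identity $\psi_{a \to b}\,\psi_{b \to c} = \psi_{a \to c}$ for any three vertices of a tree; this holds because the extra edges traversed by the concatenated walk appear twice with opposite orientations and contribute $e^{i\theta_e}e^{-i\theta_e} = 1$. Applying it with $b = r_\phi(v)$ inside each summand of $h(r_\phi(v))$, the factors involving $r_\phi(v)$ cancel and one recovers the same ratio $\left(\sum_{w \in T} q_w\,\psi_{w \to v}\,g(w)\right)\big/\left(\sum_{w \in T} q_w\right)$. This simultaneously matches the conditional expectation computed above and certifies that $\overline{f}(v)$ is a function of $\pi$ alone.

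For (ii), the tower property gives $\mathbb{E}_\pi(\overline{f}) = \mathbb{E}_\pi[\mathbb{E}_\phi(\tilde{f} \mid \phi_\bullet = \pi)] = \mathbb{E}_\phi(\tilde{f}) = f_o$ by Proposition~\ref{prop:unbiased}. For (iii), the law of total variance (which remains valid for complex random variables with $\mathrm{Var}(X) := \mathbb{E}|X - \mathbb{E} X|^2$) produces $\mathrm{Var}(\tilde{f}(v)) = \mathbb{E}(\mathrm{Var}(\tilde{f}(v) \mid \phi_\bullet = \pi)) + \mathrm{Var}(\overline{f}(v))$; rearranging and using the nonnegativity of the first term on the right delivers both the stated identity and the inequality.
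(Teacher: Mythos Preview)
Your proof is correct and follows essentially the same approach as the paper's: compute the conditional law of the root within each tree component from Eq.~\eqref{eq:prob_mtsf}, average $\tilde f(v)$ accordingly, and then invoke the laws of total expectation and variance. You additionally spell out, via the path-concatenation identity $\psi_{a\to b}\psi_{b\to c}=\psi_{a\to c}$ in a tree, why $\overline f(v)$ is genuinely $\pi$-measurable (root-independent), a step the paper leaves implicit in its final equality.
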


The method of control variates  is another classical variance-reduction technique for Monte-Carlo estimators, adding a term with zero mean to obtain a modified estimator with the same expectation but lower variance \cite{botev2017variance}. The following is proved in \cite{pilavci2021variance}, when $q_v = q$ for all $v \in V$ (\emph{i.e.} $Q = qI$.)

\begin{prop}
Set $\alpha = \frac{2q}{q + 2d_m}$, where $d_m$ is the maximum degree in the graph. Then, the estimator
$$ \hat{f} = \overline{f} - \alpha(q^{-1}(L + qI)\overline{f} - g) $$
is an unbiased estimator of $f_o$ and verifies 
$$\forall v \in V\qquad\mathrm{Var}(\hat{f}(v)) \leq \mathrm{Var}(\overline{f}(v))$$
\end{prop}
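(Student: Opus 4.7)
The plan is to tackle two independent claims: unbiasedness is a direct algebraic computation, while the pointwise variance bound reduces to a linear-algebra contraction argument whose main subtlety is the passage from an operator-norm estimate to a diagonal inequality.

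For unbiasedness I would start from the deterministic identity $q^{-1}(L+qI)f_o = g$, which follows immediately from $f_o = q(L+qI)^{-1}g$. Taking expectations in the definition of $\hat{f}$ and invoking Proposition~\ref{prop:var_reduction} gives
$$\mathbb{E}(\hat{f}) = f_o - \alpha\bigl(q^{-1}(L+qI)f_o - g\bigr) = f_o,$$
regardless of the value of $\alpha$, so the first claim is settled.

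For the variance inequality, the key algebraic fact is that $\alpha g = \alpha q^{-1}(L+qI)f_o$ turns the definition of $\hat f$ into
$$\hat{f} - f_o \;=\; M\,(\overline{f} - f_o), \qquad M := (1-\alpha)\,I \;-\; \alpha\,q^{-1}L,$$
so that $\mathrm{Cov}(\hat{f}) = M\,\mathrm{Cov}(\overline{f})\,M$ since $M$ is Hermitian. I would then establish the spectral bound $0 \preceq L \preceq 2 d_m I$ via the elementary inequality $|a-b|^2 \leq 2(|a|^2+|b|^2)$ applied to the quadratic form in Equation~\eqref{eq:magnetic-lap}, yielding $f^*Lf \leq 2\sum_v d_v |f(v)|^2 \leq 2 d_m \|f\|^2$. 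Plugging $\alpha = 2q/(q+2d_m)$ into $M$, its eigenvalues lie in $[-1,\,1-\alpha]$, so $\|M\|_{\mathrm{op}} \leq 1$.

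The delicate point, and the step I expect to be the main obstacle, is that $\|M\|_{\mathrm{op}} \leq 1$ is not in itself sufficient to conclude the \emph{pointwise} inequality $(M \Sigma M)_{vv} \leq \Sigma_{vv}$ for a general positive semidefinite $\Sigma = \mathrm{Cov}(\overline{f})$. Following the blueprint of \cite{pilavci2021variance}, I would write
$$\mathrm{Var}(\hat{f}(v)) \;=\; \mathrm{Var}(\overline{f}(v)) \;-\; 2\alpha\,\mathrm{Re}\,\mathrm{Cov}\bigl(\overline{f}(v),\,Y(v)\bigr) \;+\; \alpha^2\,\mathrm{Var}(Y(v)),$$
where $Y := q^{-1}(L+qI)\overline{f} - g$ has zero mean. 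Since the $v$-th component of $Y$ is supported on $v$ and its graph neighbours, bounding the cross term from below by a Cauchy--Schwarz step on these local contributions, combined with the spectral estimate $L \preceq 2 d_m I$, should show that the universal choice $\alpha = 2q/(q+2d_m)$ belongs to the admissible interval $(0,\,2\alpha^\star_v)$ minimizing the pointwise quadratic for every $v$. The complex-valued adaptation of the real-variable argument of \cite{pilavci2021variance} should be mechanical: $|e^{i\theta_e}|=1$ means the triangle-inequality estimates driving the spectral bound on $L$ go through unchanged, and real parts of Hermitian inner products replace plain covariances wherever cross terms arise.
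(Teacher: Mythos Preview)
The paper does not actually supply a proof of this proposition: the sentence introducing it reads ``The following is proved in \cite{pilavci2021variance},'' and Appendix~\ref{app:estimators} explicitly treats only Propositions~\ref{prop:unbiased} and~\ref{prop:var_reduction}. Your plan---handle unbiasedness directly and defer the variance inequality to the control-variate analysis of \cite{pilavci2021variance}---is therefore exactly the route the authors take. Your unbiasedness computation and the rewriting $\hat f - f_o = M(\overline f - f_o)$ with $M=(1-\alpha)I-\alpha q^{-1}L$, together with the spectral estimate $\|M\|_{\mathrm{op}}\le 1$ for $\alpha=2q/(q+2d_m)$, are correct.

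Where your sketch has a gap is precisely the step you flag as ``the main obstacle.'' You propose to lower-bound the cross term $\mathrm{Re}\,\mathrm{Cov}(\overline f(v),Y(v))$ by a Cauchy--Schwarz argument, but Cauchy--Schwarz applied to $\langle \Sigma^{1/2}e_v,\Sigma^{1/2}Ae_v\rangle$ (with $A=q^{-1}(L+qI)$ and $\Sigma=\mathrm{Cov}(\overline f)$) gives an \emph{upper} bound on the modulus, not a \emph{lower} bound on the real part; it therefore cannot establish $\alpha\le 2\alpha_v^\star$. Since, as you yourself observe, $\|M\|_{\mathrm{op}}\le 1$ is insufficient for the diagonal inequality $(M\Sigma M)_{vv}\le\Sigma_{vv}$ with a generic positive semidefinite $\Sigma$, any complete argument must exploit additional structure---either of $M$ beyond its operator norm, or of the specific covariance $\Sigma$ produced by the forest estimator. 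That structure is what \cite{pilavci2021variance} uses in the real case, and checking that it survives the passage to the magnetic Laplacian is the substantive work still missing from your outline, not a purely mechanical substitution of Hermitian inner products for real ones.
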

For a graph with heterogeneous degree distribution, with \emph{e.g.} a maximum degree much larger than the mean degree, $\hat{f}$ is only a marginal inprovement over $\overline{f}$. However, on graphs with a nearly-homogeneous degree distribution, which is the case for the graphs considered in Section \ref{sect:ranking}, we obtain substantial improvements over $\overline{f}$. 

\section{Ranking from corrupted measurements}
\label{sect:ranking}

\begin{figure*}[t]
\begin{subfigure}[t]{0.20\linewidth}
  \centering
  \centerline{\includegraphics[scale=0.23]{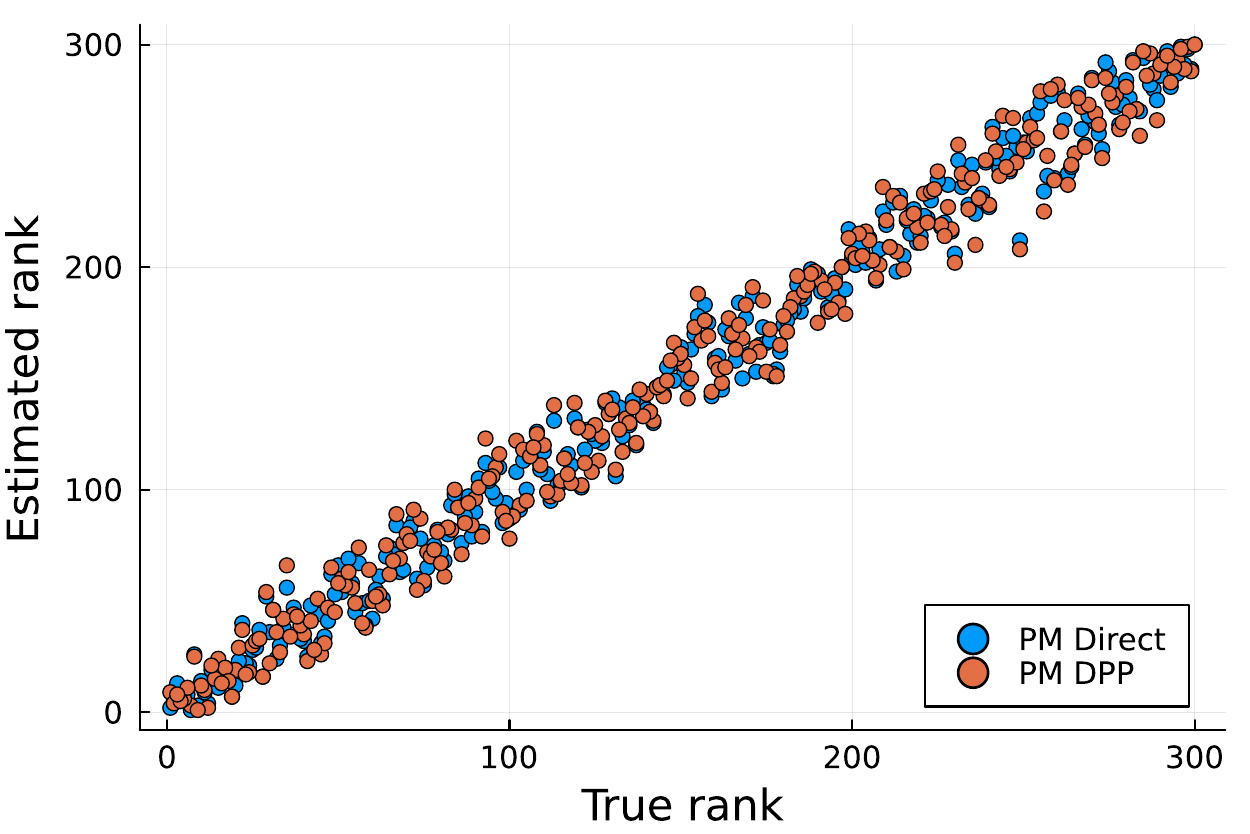}}
  \caption{$p = 0.9, m = 5$}
  \label{subfig:plots08p09}
\end{subfigure}
\hfill
\begin{subfigure}[t]{0.20\linewidth}
  \centering
  \centerline{\includegraphics[scale=0.23]{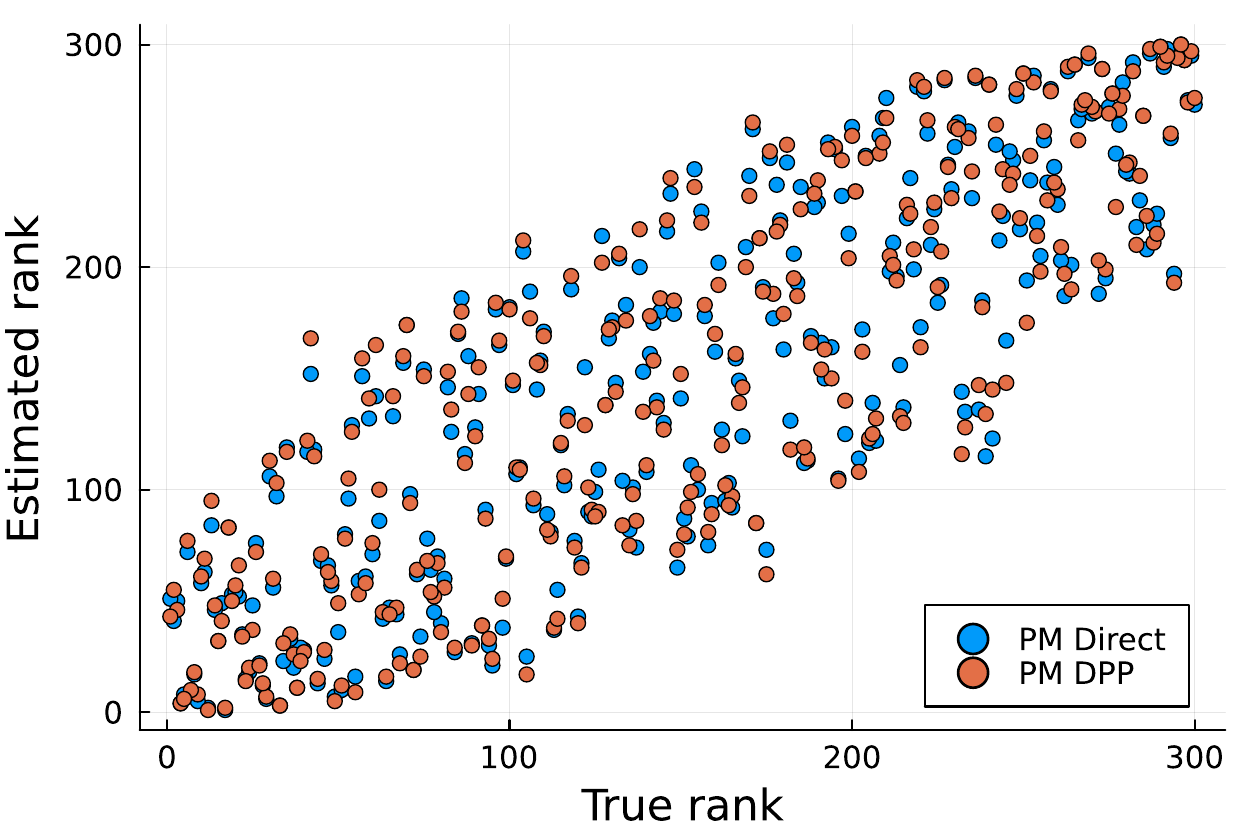}}
  \caption{$p = 0.6, m = 5$}
  \label{subfig:plots08p06}
\end{subfigure}
\hfill
\begin{subfigure}[t]{0.20\linewidth}
  \centering
  \centerline{\includegraphics[scale=0.23]{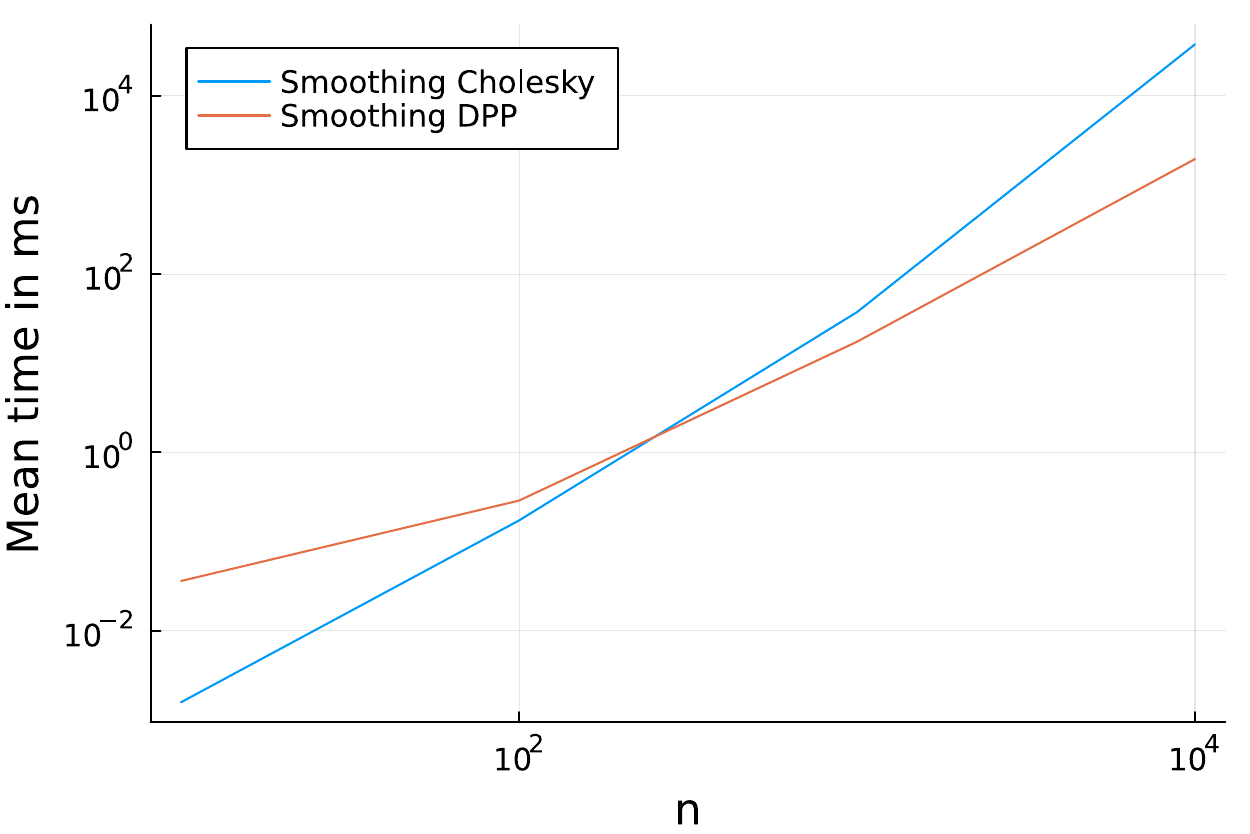}}
  \caption{$p=0.6, m = 5$}
  \label{subfig:smoothing_mean_time}
\end{subfigure}
\hfill
\begin{subfigure}[t]{0.20\linewidth}
  \centering
  \centerline{\includegraphics[scale=0.23]{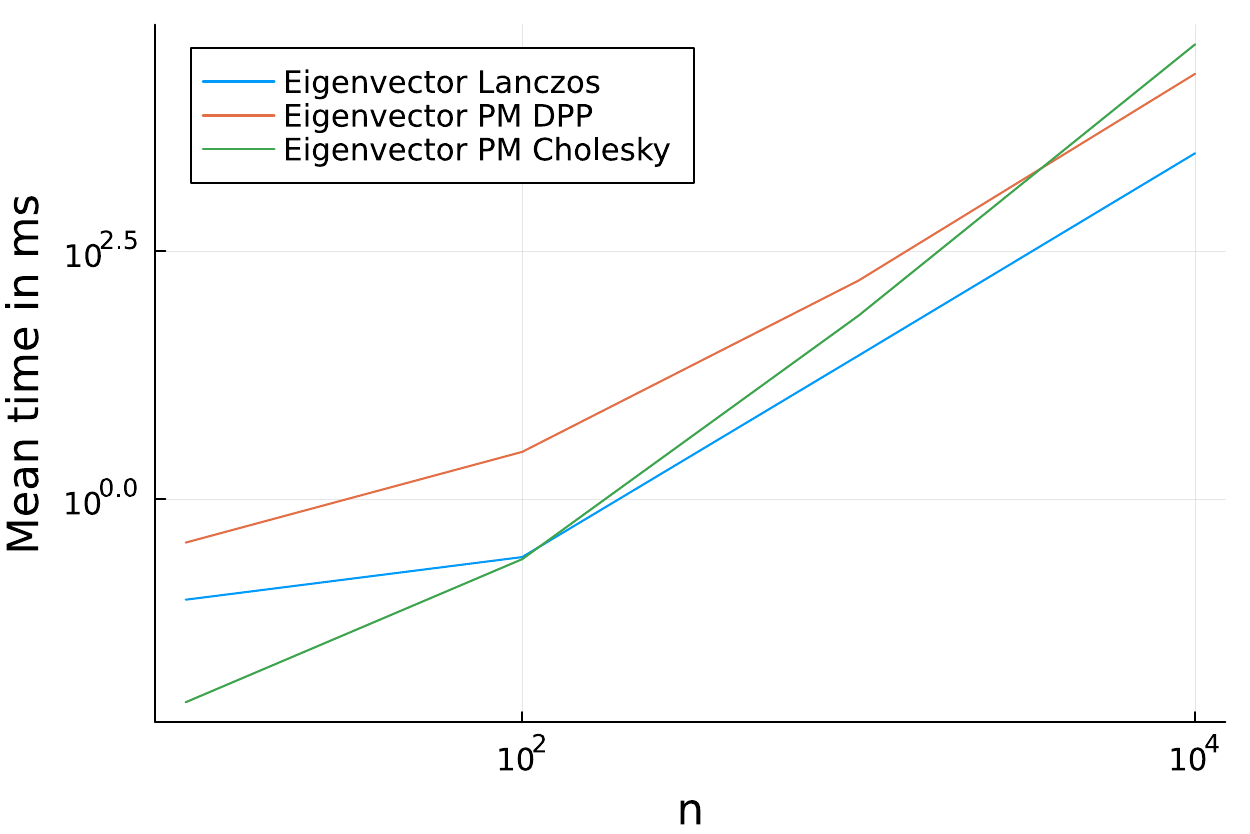}}
  \caption{$p=0.6, m = 5$}
  \label{subfig:vp_mean_time}
\end{subfigure}

\begin{subfigure}[t]{0.20\linewidth}
  \centering
  \centerline{\includegraphics[scale=0.23]{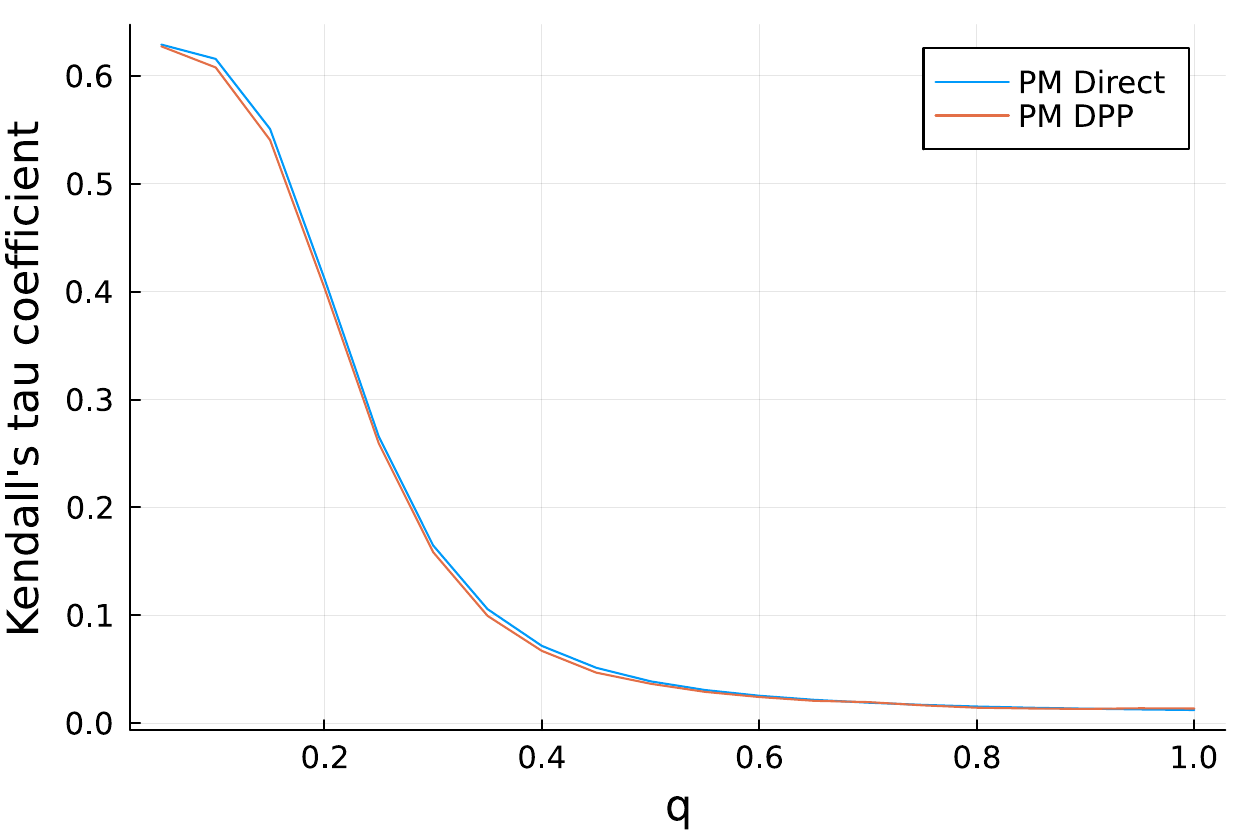}}
  \caption{$k=5,p=0.6, m = 5$}
  \label{subfig:q_kendallk5}
\end{subfigure}
\hfill
\begin{subfigure}[t]{0.20\linewidth}
  \centering
  \centerline{\includegraphics[scale=0.23]{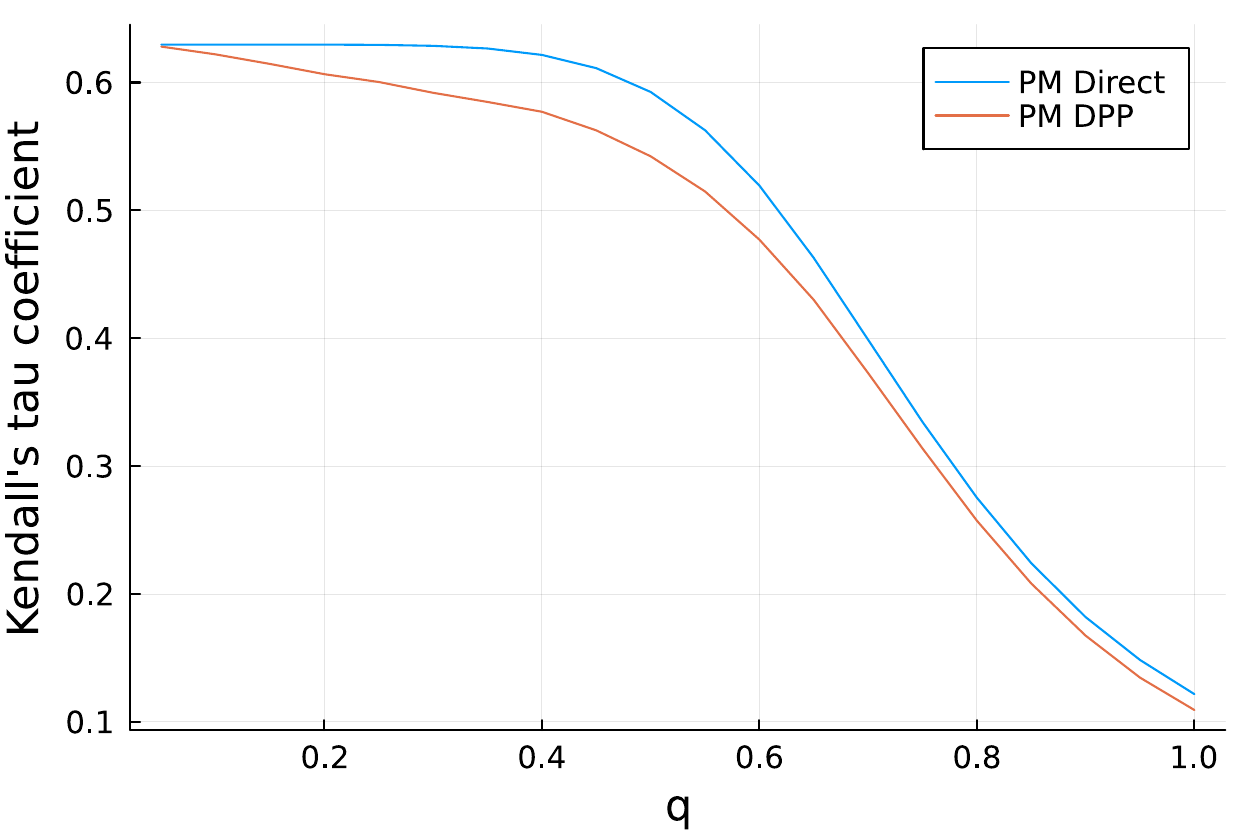}}
  \caption{$k=10,p=0.6, m = 5$}
  \label{subfig:q_kendallk10}
\end{subfigure}
\hfill
\begin{subfigure}[t]{0.20\linewidth}
  \centering
  \centerline{\includegraphics[scale=0.23]{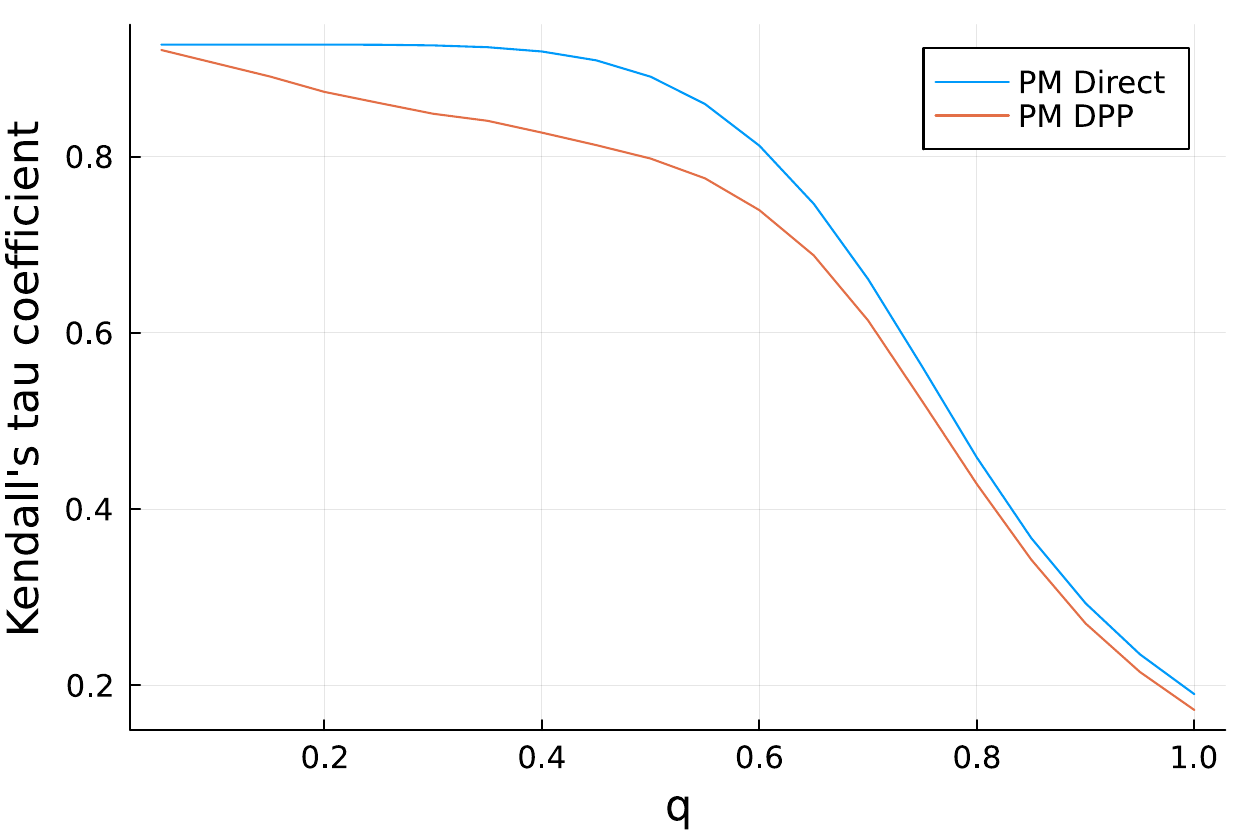}}
  \caption{$k=10,p=0.9, m = 5$}
  \label{subfig:q_kendallk10p09}
\end{subfigure}
\hfill
\begin{subfigure}[t]{0.20\linewidth}
  \centering
  \centerline{\includegraphics[scale=0.23]{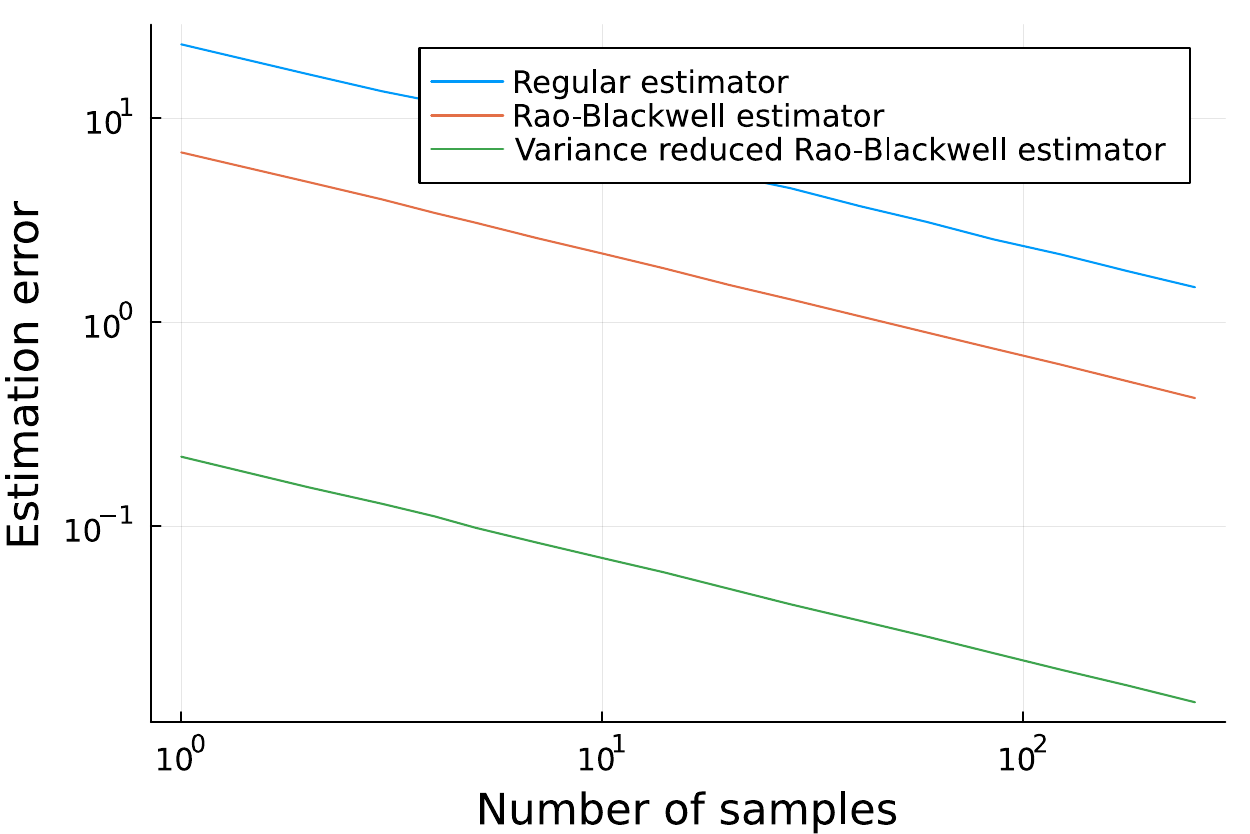}}
  \caption{$p=0.6, m = 5$}
  \label{subfig:benchmarkn3000p06s08}
\end{subfigure}
\caption{Experimental results.}
\label{fig:application}
\end{figure*}

In order to study the actual performance of our estimators, we describe an application to the \emph{ranking} problem. We focus on the ordinal ranking problem, which asks to linearly order a set $X$ of $n$ elements according to an incomplete, possibly incoherent, set of pairwise ordinal comparisons $C_{i,j} \in \{-1,1\}$, for $j > i$.
From this data, build a graph $G$ with $n$ nodes, and a directed edge from $i$ to $j$ if $C_{i,j} = 1$ (resp. from $j$ to $i$ if $C_{i,j} = -1$). Ranking according to \cite{yu2011angular,cucuringu2016sync} then suggests to define the unitary connection $\theta_{i,j} = \frac{\pi \delta C_{i,j}}{n}$ with $\delta \in (0,1)$, and to perform \emph{angular synchronization} \cite{singer2011angular,bandeira2015ten} by solving:
\begin{equation}
\label{eq:synchro}
	\argmin_{(\omega_v)_{v \in V}} \sum_{e = (v,v')} \vert  e^{i \omega_{v'}}- e^{i \theta_e} e^{i \omega_{v}} \vert^2.
\end{equation}
The optimal arguments $\omega_i$ then describe an embedding of the $n$ points onto the unit circle, from which we can extract a ranking (see \cite{cucuringu2016sync} for more details). 

In practice, solving such a non-convex optimization problem is difficult, and a spectral relaxation is considered instead:
$$ \min_{f \in \C^n, \Vert f \Vert^2 = n} f^* \tilde{L} f,$$
for $\tilde{L} = D^{-1/2} L D^{-1/2}$ the normalized graph Laplacian, with $D$ the diagonal degree matrix.
The solution of this classical problem is the eigenvector associated to the smallest eigenvalue of $\tilde{L}$, which can for instance be computed by iterating the map $x \mapsto Mx/\Vert Mx \Vert$ for $M = q(\tilde{L} + qI)^{-1}$, this is the power method \cite{saad2011numerical}. Computing $x \mapsto Mx$ can either be performed directly by solving a linear system, or using the estimators $\tilde{f}$, $\overline{f}$ and $\hat{f}$. 

If we set $\delta = \frac{1}{4}$, the sampling condition in Equation~\eqref{eq:sampling_condition} is satisfied and fast sampling can be achieved using the variation of Wilson's algorithm recalled in Appendix \ref{app:sampling}.

\subsection{Experimental results}

We illustrate in Fig. \ref{fig:application} some results regarding the performance of this approach obtained with our Julia implementation\footnote{\url{https://gricad-gitlab.univ-grenoble-alpes.fr/tremblan/mtsf_for_graph_smoothing}}.
We work with comparisons $C_{i,j}$ randomly obtained from a ground-truth ranking $r_{GT}$ according to the Erdös-Rényi Outliers model \cite{cucuringu2016sync}: a comparison is observed with probability $s$; if a comparison is observed, it follows the ranking $r_{GT}$ with probability $p$, or is chosen uniformly in $\{-1,1\}$ otherwise. We focus on the computationally challenging dense case and set $s = 0.8$. Unless otherwise specified, we use $q=0.1$. 
.
The performance of the estimator $\hat{f}$ in recovering the underlying ranking, using $m=5$ MTSFs \emph{for each} of the $k=10$ iterations of the power method, is illustrated in Figs. \ref{subfig:plots08p09} and \ref{subfig:plots08p06} for two graphs of size $n = 300$ and $p \in \{0.6,0.9\}$. The eigenvector of $L$ computed from the power method \emph{without} using the estimator is also plotted. Here, the initialization vector $y_0$ used for the power method is a random embedding of the $n$ points in the unit circle, spaced out with angle $\frac{\pi}{2n}$.

Runtime benchmarks are available in Figs. \ref{subfig:smoothing_mean_time} and \ref{subfig:vp_mean_time} for the smoothing problem (computation of $My_0$) solved either using $\hat{f}$ or using a Cholesky decomposition, and for the computation of the eigenvector of $M$ for the power method iteration using $\hat{f}$ or a Cholesky decomposition, and for the Lanczos method. We display the mean running time over $100$ measurements on fixed graphs of size $n \in \{10,10^2,10^3,10^4\}$.

We plot Kendall's $\tau$ coefficients \cite{kendall1938new} (the larger the better) for the rankings recovered from the power method in Figs. \ref{subfig:q_kendallk5}, \ref{subfig:q_kendallk10} and \ref{subfig:q_kendallk10p09} for $n=3000$ and varying $q$, averaging over $20$ realisations.

Reconstruction errors $\Vert My - \mathrm{e}(f_o) \Vert$ across varying $m$, averaged over $20$ runs for each of the estimators $\mathrm{e}(f_o)$ of $f_o$, are in Fig. \ref{subfig:benchmarkn3000p06s08} with $n = 3000$. 

\subsection{Discussion}

The results in Figs. \ref{subfig:plots08p09} and \ref{subfig:plots08p06} compare the performance of the reconstructions obtained by the power method with and without using $\hat{f}$, on only one realization of the graph. For these noise regimes, the ground truth ranking $\tau_{GT}$ can no longer be recovered (which would result in a straight diagonal line). In both cases, the recovered ranking is slightly more spread out when using the estimator instead of the exact power method.

We obtain faster runtime in the smoothing problem than a standard Cholesky
decomposition, in graphs with more than $n \sim 10^3$ nodes (Fig.
\ref{subfig:smoothing_mean_time}). For the eigenvector computation in Fig.
\ref{subfig:vp_mean_time}, cross-over occurs at $n \sim 10^4$ against the power method, while we do not outperform a direct Lanczos iteration. Note that performance for the eigenvector computation may be increased by sampling only one set of $m=5$ MTSFs, used for all iterations of the power method. Moreover, our current implementation is far from optimal.

As $q$ decreases, the $\tau$ coefficient increases until it reaches a fixed value in Figs. \ref{subfig:q_kendallk5}, \ref{subfig:q_kendallk10} and \ref{subfig:q_kendallk10p09}. Comparing of Figs. \ref{subfig:q_kendallk5} and \ref{subfig:q_kendallk10} shows that this behavior actually reflects the convergence of the power method, as the spectral gap of $M$ is lower for larger $q$, requiring more iterations to converge. This suggests that there is a trade-off for the best choice of $q$: it should be as large as possible in order to reduce sampling time, and not too large so that it allows fast convergence of the power method.
Note also that Kendall's $\tau$ seems to reach the value of $p$ when convergence occurs. 

Regarding the convergence of the estimators, all three versions of the estimators in Fig. \ref{subfig:benchmarkn3000p06s08} have linear decay in log-log-scale, which is characteristic of the $\mathcal{O}(1/\sqrt{n})$ convergence rate of Monte-Carlo estimators. In our simulations, the estimator $\hat{f}$ improves on the regular estimator by a factor of $10$.

Throughout these experiments, we sampled $m = 5$ MTSFs to compute the estimators. Here, this choice was arbitrary, but we notice nonetheless that this is of the order of the $\mathcal{O}(\log \vert V \vert)$ uniform spanning trees 
necessary to obtain spectral sparsifiers for connection-free graphs~\cite{kaufman2022scalar}, recently adapted to MTSFs in~\cite{fanuel2022sparsification}.

\section{Conclusion}

We define new estimators built by propagating values along edges sampled according to a recently introduced DPP, for the smoothing problem on graphs endowed with a unitary connection, thus generalizing previous existing approaches on graphs. The evaluation of these estimators on the ranking problem, using our Julia implementation, show that they can be advantageous for smoothing starting from moderately sized graphs, as compared to a Cholesky decomposition, while they do not improve on a Lanczos iteration for eigenvector computation. Nonetheless, the proposed estimators exhibit a potentially useful computational property that is uncommon among deterministic algorithms: their computation can be carried out without any prior global knowledge of the graph, from local neighbor queries only, which in some scenarii may be the only practical interaction. The choice of the parameters $q$ and $m$ is application-dependent, and requires further investigation. Possible extensions of this work include designing new applications of the proposed estimators, as well as generalizing this approach to higher-dimensional signals, where current arguments do not carry over. 

\bibliographystyle{IEEEbib}
\bibliography{Template}

\newpage

\phantom{new page}

\newpage

\appendix

\section{Technical definitions}
\label{app:defs}

We provide some technical definitions used in Appendices \ref{app:background} and \ref{app:estimators}.

Recall the definition of the graph Laplacian $L = \nabla^* \nabla$ \cite{chung1997spectral}, with $\nabla \in \R^{E \times V}$ the (weighted) $\vert E \vert \times \vert V \vert$ edge-vertex incidence matrix.
The graph Laplacian reveals many properties of functions  $f \in \R^V$ defined on $G$. For instance, if the graph is connected, the one-dimensional kernel of $L$ is generated by the constant function. 

When considering complex signals $f \in \C^V$ on $G$, one first defines a \emph{unitary connection} given by a family $\psi_{v,e} = e^{i \theta_{v,e}}$ of unitary complex numbers, extended by $\psi_{v,e} = \psi_{e,v}^*$ and $\psi_{v,v'} = \psi_{e,v'} \psi_{v,e} = e^{i \theta{v,v'}}$. This connection gives rise to a \emph{magnetic Laplacian} $L = \nabla^* \nabla$ \cite{kenyon2011spanning}, with $\nabla$ a twisted discrete differential:
$$
  \nabla_{e,v} = 
  \begin{cases}
    -\sqrt{w_e}\psi_{v,e} & \text{ if $v = a$,} \\    
    \sqrt{w_e}\psi_{v,e} & \text{ if $v = b$,} \\
    0 & \text{otherwise},
  \end{cases}
$$
for $e = (a,b)$.
Unlike the usual graph Laplacian, the magnetic Laplacian has a trivial kernel ($\ker{L} = 0$), unless the connection is \emph{trivial} \cite{fanuel2022sparsification}, which is a key difference in the analysis of the DPP described in Appendix \ref{app:dpps}.

\section{DPP and smoothing estimators}
\label{app:background}

\subsection{Background on DPP and the MTSF process}
\label{app:dpps}

A (discrete) Determinantal Point Process (DPP) is a distribution over subsets of a finite set $\X$ parameterized by a symmetric matrix $K \in M_{\vert \X \vert}(\mathbb{C})$ whose eigenvalues all lie in $\left[ 0, 1 \right]$, this matrix is the \emph{marginal kernel} of the process. We say that $X \subseteq \X$ is distributed according to this DPP if $\prb(A \subseteq X) = \det K_{A,A}$ for every $A \subseteq \X$, where $K_{S,T}$ denotes the submatrix matrix of $K$ whose rows (resp. columns) have been restricted to $S$ (resp. $T$). If $K$ is a projection matrix (i.e. its eigenvalues are either $0$ or $1$), additional structural properties are ensured. For instance, a DPP based of $K$ will only give non-zero probability to samples of size $\rk(K)$, the rank of $K$.

DPPs were introduced as \emph{repulsive} processes \cite{macchi1975coincidence}, and have been sought after in machine learning \cite{kulesza2012determinantal} and randomized linear algebra \cite{derezinski2021determinantal}.

\begin{prop}
The probabilities in Equation~\eqref{eq:prob_mtsf} define a (projective) DPP over $E \cup V$, with marginal kernel $$ K_m = \begin{bmatrix} \nabla \\ \sqrt{Q} \end{bmatrix} (L + Q)^{-1} \begin{bmatrix} \nabla^* \ \sqrt{Q} \end{bmatrix},$$
with $Q$ the $\vert V \vert \times \vert V \vert$ diagonal matrix of the $q_v$'s.
\end{prop}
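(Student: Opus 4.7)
The plan is to recognise $K_m$ as an orthogonal projector and then identify the resulting projective DPP with the distribution in Eq.~\ref{eq:prob_mtsf} via a magnetic matrix-tree expansion. Set
$$B = \begin{bmatrix} \nabla \\ \sqrt{Q} \end{bmatrix} \in \C^{(E\cup V)\times V},$$
so that $B^*B = \nabla^*\nabla + Q = L + Q$. Since all $q_v>0$ the magnetic Laplacian makes $L+Q$ positive definite, hence $B$ has full column rank $|V|$ and $K_m = B(B^*B)^{-1}B^*$ is the orthogonal projector onto $\im B$. Its eigenvalues lie in $\{0,1\}$ with rank $|V|$, so $K_m$ is a bona fide projective DPP kernel, and any sample $\phi$ must have size exactly $|V|$.

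Next, I would match the probabilities. For any $\phi \subseteq E\cup V$ with $|\phi|=|V|$, projectivity gives $\prb(X=\phi)=\prb(\phi\subseteq X)=\det K_{m,\phi,\phi}$. Writing $B_\phi$ for the square $|V|\times|V|$ submatrix of rows indexed by $\phi$, a direct computation yields
$$\det K_{m,\phi,\phi} = \det\!\big(B_\phi (B^*B)^{-1} B_\phi^*\big) = \frac{|\det B_\phi|^2}{\det(L+Q)}.$$
The remaining, and main, task is to evaluate $|\det B_\phi|^2$ combinatorially. I would argue that $B_\phi$ is singular unless the spanning subgraph carried by $\phi$ is a rooted MTSF: each connected component of $\phi$ must contain either exactly one root (contributing a $\sqrt{Q}$ row) or exactly one cycle (so that the edge-rows in that component form a square block), otherwise the rows of $B_\phi$ are linearly dependent. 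When $\phi$ is a rooted MTSF, $B_\phi$ is block-diagonal after reordering according to the components, so $\det B_\phi$ factorises over components. On a rooted tree component with root $r$, an elementary row reduction from the leaves gives the factor $\sqrt{q_r}\prod_{e}\sqrt{w_e}$ (connection phases cancel in modulus). On a unicycle component with cycle $C$, expanding the determinant of the localised incidence block along the cycle produces $\prod_{e}\sqrt{w_e}$ times $(1-e^{i\theta_C})$ up to an overall phase. Taking squared moduli yields $q_r$, $w_e$, and $|1-e^{i\theta_C}|^2 = 2-2\cos\theta_C$ respectively, which is exactly the weight in Eq.~\ref{eq:prob_mtsf}.

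Finally, summing the unnormalised weights over all rooted MTSFs must reproduce $\det(L+Q)$, which follows from Cauchy--Binet applied to $\det(L+Q)=\det(B^*B)=\sum_{|\phi|=|V|}|\det B_\phi|^2$ restricted to the subsets where $B_\phi$ is non-singular, i.e.\ the rooted MTSFs. This simultaneously confirms the normalisation of Eq.~\ref{eq:prob_mtsf} and the equality $\prb(X=\phi)=\det K_{m,\phi,\phi}$. I expect the main obstacle to be the determinant evaluation on unicycle components: the sign/phase bookkeeping on the cycle, and checking that orientation ambiguities in $\theta_e$ drop out when taking moduli, is the delicate step. Everything else is formal linear algebra (projector identity, Cauchy--Binet) or a component-wise block decomposition, and the statement is consistent with the Forman/Kenyon magnetic matrix-tree theorem cited in the paper.
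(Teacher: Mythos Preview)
Your proposal is correct and is precisely the Kenyon-style magnetic matrix--tree computation the paper alludes to; the paper itself does not give a self-contained proof but only cites \cite{fanuel2022sparsification} and \cite{kenyon2011spanning}. Your argument (projector identity for $K_m=B(B^*B)^{-1}B^*$, then $|\det B_\phi|^2/\det(L+Q)$ evaluated component-wise, with Cauchy--Binet for the normalising constant) is exactly what those references do, so you have in fact supplied more detail than the paper.
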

\begin{proof}
See \cite{fanuel2022sparsification}. This can also be proved by a straightforward extension of the computations in \cite{kenyon2011spanning}.
\end{proof}

This property further relates our study to that of USTs, RSFs and FUs, which have also been showed to be DPPs \cite{avena2013some,kenyon2011spanning}.

When working with DPPs, the Cauchy-Binet formula is an important technical tool, and states that for an $m \times m$ matrix $M = AB$, with $A$ $m \times n$, $B$ $n \times m$ and where $n > m$ we have:
$$ \det M = \sum_{T} \det A_{:,T} \det B_{T,:},$$
where the sum is over $T \subseteq \{1,...,n\}$ of size $\vert T \vert = m$.

\subsection{Sampling}
\label{app:sampling}

We briefly recall the sampling algorithm from \cite{fanuel2022sparsification}. Under the condition in Equation~\eqref{eq:sampling_condition}, a MTSF $\phi$ distributed according to Equation~\eqref{eq:prob_mtsf} is built by iterating the following until all nodes belong to $\phi$.
\begin{itemize}
    \item[a)] Start a random walk from any node $v$ not in $\phi$, while keeping track of the path $p$ built from this random walk. The transitions are as follows: with probability proportional to $q$, set the current node as a root. Otherwise, choose a successor $u$ in the neighborhood of the current node $w$ with probability proportional to $w_{(w,u)}$.
    \item[b)] If a node $r$ is designated as a root, add it to the forest $\phi$ together with the path $p$ and stop the random walk.
    \item[c)] If the path $p$ self-intersects at $w \in V$ before reaching a node in $\phi$, and forms a loop $C$: keep this loop in $p$ with probability $1-\cos(\theta_C)$, add $p$ to the forest $\phi$ and stop the random walk. Otherwise, erase this loop and continue the random walk from $w$.
    \item[d)] If $p$ reaches a node in $\phi$, stop the random walk and add the path $p$ to the forest $\phi$.
\end{itemize}
    
Note that the $\theta_C$ \emph{a priori} depends on the orientation of the loop, but $\cos(\theta_C)$ is orientation-agnostic. Such random walk based sampling algorithms are a noteworthy instance of efficient DPP sampling, which usually require a diagonalization of the kernel $K$ and would result here in $\mathcal{O}((\vert E \vert + \vert V \vert)^3)$ sampling time.

\section{Proofs for Section \ref{SECT:ESTIMATORS}}
\label{app:estimators}

We provide straightforward proofs of the Propositions \ref{prop:unbiased} and \ref{prop:var_reduction} in Section \ref{SECT:ESTIMATORS}. Unless otherwise specified, we work with rooted MTSFs sampled according to Equation~\eqref{eq:prob_mtsf}.

\subsection{Proof of Proposition \ref{prop:unbiased}}

The main part of the proof consists in rewriting $(L + Q)^{-1}$ as an expectation (see Equation~\eqref{eq:laplacian_expectation}). We first use Cramer's rule to express the coefficients of $(L + Q)^{-1}$:
$$ (L + Q)^{-1}_{i,j} = (-1)^{i + j} \frac{\det(L + Q)_{-j,-i}}{\det(L + Q)},$$
where $-k = V \setminus \{k\}$.

For the denominator, we obtain the following sum over MTSFs (see e.g. \cite{fanuel2022sparsification})
$$\det(L + Q) = \sum_\phi \prod_{r \in \rho(\phi)} q_r \prod_{e \in \phi_\bullet} w_e \prod_C (2 - 2 \cos(\theta_C)),$$
which is the normalizing constant in Equation~\eqref{eq:prob_mtsf}.
The Cauchy-Binet formula further allows to express $\det(L + Q)_{-j,-i}$ as:
$$\sum_{\phi \subseteq E \cup V, \vert \phi \vert = \vert V \vert -1} \det \begin{bmatrix} \nabla^* \ \sqrt{Q} \end{bmatrix}_{-j,\phi} \det \begin{bmatrix} \nabla \\ \sqrt{Q} \end{bmatrix}_{\phi,-i},$$
which can be re-arranged to obtain:
\begin{equation}
\label{eq:sum_of_dets}
\sum_{\phi \subseteq E \cup V, \vert \phi \vert = \vert V \vert -1} (-1)^{i+j} \det \begin{bmatrix}\begin{bmatrix} \nabla^* \ \sqrt{Q} \end{bmatrix}_{:,\phi} \ \delta_j \end{bmatrix} \det \begin{bmatrix} \begin{bmatrix} \nabla \\ \sqrt{Q} \end{bmatrix}_{\phi,:} \\ \delta_i^* \end{bmatrix},\end{equation}
with the canonical basis vector $\delta_i \in \R^V$.

The product of determinants in Equation~\eqref{eq:sum_of_dets} can be split along the connected components $\phi^k \subseteq E \cup V$ of $\phi$. 
Using similar computations as \cite{kenyon2011spanning}, we find this product is non-zero if 
one of these connected components $\phi^k$ is a tree containing $i$ and $j$ with no root for which the product of determinants is $\psi_{j \rightarrow i} \prod_{e \in \phi^k_\bullet} w_e$ and, if this component does not span the entire graph, the remaining subgraph contains disjoint connected components which have to be taken among the following subgraphs:
\begin{itemize}
\item a unicycle with cycle $C$ for which the product is $(2 - 2\cos(\theta_C)) \prod_{e \in \phi^k_\bullet} w_e$,
\item a rooted tree with root $r$ for which the product is $q_r \prod_{e \in \phi^k_\bullet} w_e$.
\end{itemize}

In the end, when computing the sum, this translates to:
\begin{align}
\label{eq:laplacian_expectation}
	(L + Q)_{i,j}^{-1} & = \frac{1}{q_j} \sum_{\phi} \mathbf{1}_{i \sim j} \psi_{j \rightarrow i} \prb(\phi) \\
	& = \frac{1}{q_j} \mathbb{E}(\mathbf{1}_{i \sim j} \psi_{j \rightarrow i}), \nonumber
\end{align}
where $\mathbf{1}_{i \sim j}$ is the indicator that $i$ belongs to a tree whose root is $j$.

Writing out the formula for the optimal solution $f_o$ finally yields:
\begin{align}
f_o(i) & = \delta_i^* (L + Q)^{-1}Q g \nonumber \\
& = \sum_j q_j (L+Q)^{-1}_{i,j} g(j) \nonumber \\
	& = \sum_j \mathbb{E}(\mathbf{1}_{i \sim j} \psi_{j \rightarrow i} g(j)) \nonumber \\
& = \mathbb{E}(\tilde{f}(i)),
\end{align}
where the second to last equation is obtained from Equation~\eqref{eq:laplacian_expectation}, and the last line follows from the law of total expectation. 

\subsection{Proof of Proposition \ref{prop:var_reduction}}

Suppose that $\phi_\bullet = \pi$, denote by $r_k$ the root of the connected component $\phi_\bullet^k$, and by $V_k$ the set of vertices in $\phi^k$. Then, if $v$ belongs to $\pi^k$ a tree:
\begin{align}
    \mathbb{E}_{\phi^k}(\tilde{f}(v) \ \vert \ \phi_\bullet^k = \pi^k) & = \sum_{r \in V_k} \prb(r_k = r) \tilde{f}(v) \nonumber \\
    & = \sum_{r \in V_k} \frac{q_r}{\sum_{w \in V_k}q_w} \psi_{r \rightarrow v} g(r) \nonumber \\
    & = \overline{f}(v). \nonumber
\end{align}
If $\pi^k$ is a unicycle, then $\mathbb{E}(\tilde{f}(v)) = 0$.

The two consequences follow from the laws of total expectation and variance.

\end{document}